\newtheorem{theorem}{Theorem}
\providecommand{\keywords}[1]{
  \small	
  \textbf{\textit{Keywords---}} #1
}
\title{\textbf{A Novel Efficient Multiparty Semi-Quantum Secret Sharing Protocol Using Entangled States for Sharing Specific Bits}}
\begin{document}

\author[1]{Mustapha Anis Younes \footnote{Corresponding Author: \texttt{mustaphaanis.younes@univ-bejaia.dz}}}
\author[2]{Sofia Zebboudj \footnote{\texttt{sofia.zebboudj@univ-ubs.fr}}}
\author[3]{Abdelhakim Gharbi \footnote{\texttt{abdelhakim.gharbi@univ-bejaia.dz}}}
\affil[1,2]{Université de Bejaia, Faculté des Sciences Exactes, Laboratoire de Physique Théorique, 06000 Bejaia, Algérie}
\affil[2]{ENSIBS, Université de Bretagne Sud, 56000 Vannes, France}
\date{\vspace{-8ex}}

\maketitle 
\pagenumbering{arabic}
\pagenumbering{arabic}

\begin{abstract}
    Recently, Younes et al. \cite{Younes2024} proposed an efficient multi-party semi-quantum secret sharing (SQSS) scheme that generalizes Tian et al.'s three-party protocol \cite{Tian2021} to accommodate multiple participants. This scheme retains the original advantages, such as high qubit efficiency and allowing the secret dealer, Alice, to control the message content. However, He et al. \cite{He2024} identified a vulnerability in Tian et al.'s protocol to the double CNOT attack (DCNA), which also affects the generalized scheme. In response, He et al. proposed an improved protocol to address this issue. Despite these improvements, their protocol is limited to two participants and remains a primarily two-way communication scheme, which does not fully prevent the Trojan horse attack without expensive quantum devices such as photon number splitters (PNS) and wavelength filters (WF). To address these issues, this paper develops a novel multi-party SQSS scheme using the quantum property between Bell states and the Hadamard operation to detect eavesdroppers. This new scheme is secure against the DCNA, intercept-resend attack, and collective attack. It employs a fully one-way communication scheme, entirely preventing the Trojan horse attack without costly quantum devices, aligning with the semi-quantum environment's original intent. This new protocol also offers better qubit efficiency and allows Alice to share specific secrets. 
\end{abstract}
\keywords{Quantum cryptography, Semi-quantum secret sharing, Trojan horse attack, CNOT attack, Bell states, Entangled states.}

\section{Introduction}\label{intro}

Secret sharing \cite{Shamir1979} is a procedure that allows a dealer to distribute a secret among several participants. This is done in such a manner that no individual participant has any intelligible knowledge of the secret. Only when a sufficient number of participants collaborate can the dealer's secret be reconstructed \cite{Brickell1991, Brickell}. However, classical secret sharing (CSS) cannot effectively address the problem of eavesdropping unless combined with other technologies such as encryption. Moreover, the advent of quantum algorithms poses a significant threat to classical encryption algorithms \cite{Chuang1995, Plenio1996}. Fortunately, quantum cryptography can offer a solution to these issues. Unlike classical methods, the secret is split, transmitted, and recovered through quantum operations, ensuring security based on the fundamental principles of quantum physics.

In 1999, Hillery et al. \cite{Hillery1999} introduced the first quantum secret sharing (QSS) protocol using three-particle entangled GHZ states. Since then, numerous QSS protocols have been developed \cite{Zhang2005, Tyc2002, Bagherinezhad2003, Sarvepalli2012, Karlsson1999, Yu2008, Tavakoli2015, Xiang2019, Williams2019, Schmid2005, Yan2005, Qin2018}, leveraging various quantum resources such as entangled states, single photons, graph states, and multi-level states. However, these protocols typically assume that all participants in the quantum communication process possess full quantum capabilities. This assumption is highly unrealistic due to the high costs and complexity associated with the necessary quantum devices \cite{Cirac2020}.

To address these practicality challenges, Boyer et al. \cite{Boyer2007} introduced the concept of "semi-quantum environment" in 2007. This approach enables secure communication between a fully quantum-capable party and a classically limited party. The quantum participant has full quantum capabilities and can perform all quantum operations while a classical participant can only perform the following operations: (1) generate qubits in the $Z$ basis $\{\ket{0}, \ket{1}\}$; (2) measure qubits in the $Z$ basis $\{\ket{0}, \ket{1}\}$; (3) reflect qubits without disturbance; and (4) reorder qubits. Since then, many other semi-quantum protocols have emerged, such as SQKD \cite{Boyer2009,Zou2009,Boyer2017,Tsai2018}, semi-quantum private comparisons (SQPC) \cite{Jiang2020,Chongqiang2021,YanFeng2018}, and semi-quantum secret sharing protocols (SQSS) \cite{Xiang2019,Li2010,Xie2015}, In 2013, Nie et al. \cite{Nie2012} expanded this framework by introducing unitary operations on single qubits as a viable operation in a semi-quantum environment. This expansion was driven by the rapid development and feasibility of unitary operations for single qubits \cite{Bouwmeester1997,Ren2017,Young2013,Tipsmark2011,Cerf1998,O’Brien2007}. Since then, several semi-quantum protocols \cite{Tsai2020,Tsai2021,Li2020,Tsai2022,Tsai2019,Yang2020} have incorporated unitary operations alongside Boyer et al.'s original operations \cite{Iqbal2020}.

The first SQSS protocol was proposed by Li et al. \cite{Li2010} in 2010 using maximally entangled states. Two years later, Li et al. \cite{Li2013} extended this protocol to accommodate multiple participants and introduced a new SQSS protocol without entanglement, offering a more practical and feasible approach. In the same year, Yang and Hwang \cite{CHUNWEI2013} introduced a novel key construction method that significantly improved qubit efficiency. In 2015, Xie et al. \cite{Xie2015} proposed a scheme that allowed Alice to share a specific message instead of a random one. However, Yin et al. \cite{Yin2016} demonstrated that this protocol was vulnerable to intercept and resend attacks. Although they proposed a solution, it deviated from the semi-quantum condition. In the next year, Gao et al. \cite{Gao2017} provided a new fix for Xie et al.'s scheme that adhered to the semi-quantum condition, though it required each participant to share two secret keys with Alice using a secure semi-quantum QKD protocol. In 2016, a new multiparty protocol was introduced by Gao et al. \cite{Gao2016}, which restricted classical participants to performing measurements in a computational basis and rearranging the order of qubits. In 2017, Yu et al. \cite{Yu2017} introduced a novel SQSS protocol that enabled Alice to share secrets with multiple participants using $n$-particle GHZ states. This protocol could also be efficiently converted into a multi-party semi-quantum key distribution scheme. In 2020, Li et al. \cite{Li2020} proposed another multi-party SQSS scheme based on Bell states. The following year, Ye et al. \cite{Chongqiang2021} introduced a new efficient multi-party SQSS scheme based on $n$-particle GHZ states, using both measured and reflected qubits as the secret keys. This approach led to better efficiency compared to previous multi-party schemes. In 2023, Tian et al. \cite{Tian2023} introduced a scheme with a high channel capacity. During the same year, Hou et al. \cite{Hou2024} proposed a circular scheme based on hybrid single qubit and GHZ-type states. Xing et al. \cite{Xing2023} also introduced an SQSS protocol based on single photons, allowing Alice to share a specific secret by relying on particle reordering, significantly improving efficiency and ensuring security.

Recently, Tian et al. \cite{Tian2021} proposed an efficient three-party SQSS protocol based on Bell states. This protocol had significant advantages over similar schemes: (1) Alice could share specific messages with the participants, (2) it demonstrated better qubit efficiency compared to similar approaches, and (3) reduced computational overhead. However, in 2024, He et al. \cite{He2024} revealed that Tian et al.'s protocol was vulnerable to the double CNOT attack (DCNA). They showed that an eavesdropper could steal Alice's secret using this attack while remaining undetected. To address this, they proposed an improvement where classical participants reorder the qubits they return to Alice. This modification not only prevented the DCNA but also allowed Alice and the participants to share an additional random secret. Despite this, the improved protocol remains primarily a two-way scheme, which does not fully prevent the Trojan horse attack without equipping the classical participants with expensive quantum devices such as photon number splitters (PNS) and wavelength filters (WF). These devices deviate from the original intent of the semi-quantum environment, designed to demonstrate low-cost and limited-capability protocols comparable in security to fully quantum ones.

In 2023, Younes et al. \cite{Younes2024} generalized Tian et al.'s protocol to accommodate $M$ classical participants using $M$-particle entangled states instead of Bell states. This scheme partially implemented a one-way communication structure that could mitigate the Trojan horse attack to some extent. However, although the generalized scheme retains the favorable features of the original protocol, it still suffers from the same security weakness against the double CNOT attack, as it relies on the same decoy photon technology to detect eavesdroppers.

This paper addresses two main issues: (1) improving Younes et al.'s multi-party scheme to withstand the DCNA, and (2) developing a multi-party scheme that fully mitigates the Trojan horse attack without requiring costly quantum devices. To address these issues, we propose a novel multi-party SQSS protocol. In this new scheme, Alice uses $M$-particle entangled states to share her secret and Bell states to detect eavesdroppers and does not need entangled measurement capabilities. Classical participants are restricted to two operations: (a) measuring qubits in the $Z$ basis $\{\ket{0}, \ket{1}\}$ and (b) performing the Hadamard operation $H$. his improved scheme has several advantages: (i) by leveraging the property between Bell states and the Hadamard operation, the DCNA is prevented, (ii) the fully one-way communication structure makes the protocol robust against the Trojan horse attack without requiring classical participants to use costly quantum devices, (iii) Alice can control the content of the secret she shares, and (iv) the protocol offers superior qubit efficiency compared to other multi-party SQSS protocols.

This remaining of this paper is structured as follows: In Section \ref{scheme}, we introduce our new multi-party SQSS protocol. In Section \ref{security}, we present its security analysis against external and internal threats. In Section \ref{eff}, we analyze the efficiency of the scheme and offer a comparative analysis with similar multi-party SQSS protocols. Finally, the paper concludes in the \hyperref[conclusion]{last section}.

\section{The improved scheme} \label{scheme}

To address the issues mentioned in section \ref{intro}, we propose in this paper, a new semi-quantum QSS scheme, where \textit{a secret dealer}, namely Alice is an entity capable of performing quantum operations, while $M$ other participants of the scheme are capable of performing the following operations : 

\begin{enumerate}
    \item Measuring the states in the $Z$ basis.
    \begin{equation}
        Z = \{\ket{0}, \ket{1}\}
    \end{equation}

    \item The Hadamard operation represented by Eq. (\ref{eq:hadamard})
        
    \begin{align}\label{eq:hadamard}   
        H & = \frac{1}{\sqrt{2}} \Big(\ket{0}+\ket{1}\Big)\Big(\bra{0}+\bra{1}\Big)\\
        & = \frac{1}{\sqrt{2}}\Big(\ketbra{0}+\ketbra{1}{0}+\ketbra{0}{1}-\ketbra{1}{1}\Big) \nonumber \\
         & =\frac{1}{\sqrt{2}}\begin{pmatrix}
            1 & 1 \\
            1 & -1
        \end{pmatrix} \nonumber
    \end{align}
\end{enumerate}

Note that the Hadamard operations has the following actions on the states $\{\ket{0}, \ket{1}, \ket{+}, \ket{-}\}$:

\begin{align} \label{h on single qb1}
    H \ket{0} &=\ket{+} \\
    H\ket{1} &=\ket{-}\\
    H\ket{+} &=\ket{0}\\
    H\ket{-} &=\ket{1} \label{h on single qb2}
\end{align}

These two capabilities have been practiced in quantum computers and optical experiment implementations \cite{Bouwmeester1997,Ren2017,Young2013,Tipsmark2011,Cerf1998,O’Brien2007}, demonstrating the feasibility of Hadamard operations and $Z$ basis measurement devices in real-world applications.

The proposed scheme is based on our previous work \cite{Younes2024}. The novelty of this new improved scheme resides mainly in two things:
\begin{enumerate}
    \item Adapting the restrictions made on the classical participants' capabilities. 
    \item Using the propriety between Bell states and the Hadamard operation to detect eavesdroppers instead of single photons. The Bell states are defined as follow:

\begin{align} \label{bell1}
    \ket{\phi^+} &= \frac{1}{\sqrt{2}} (\ket{00}+\ket{11})=\frac{1}{\sqrt{2}}(\ket{++}+\ket{--}), \\ \label{bell2}
        \ket{\phi^-} &= \frac{1}{\sqrt{2}} (\ket{00}-\ket{11})=\frac{1}{\sqrt{2}}(\ket{+-}+\ket{-+}), \\ \label{bell3}
        \ket{\psi^+} &= \frac{1}{\sqrt{2}} (\ket{01}+\ket{10})=\frac{1}{\sqrt{2}}(\ket{++}-\ket{--}), \\ \label{bell4}
        \ket{\psi^-} &= \frac{1}{\sqrt{2}} (\ket{01}-\ket{10})=\frac{1}{\sqrt{2}}(\ket{-+}-\ket{+-}). 
\end{align}
    According to Eqs. (\ref{h on single qb1}-\ref{h on single qb2}), the Hadamard operation affects the Bell states as follows:

    \begin{align} \label{bell-hadamard1}
        (H \otimes H)\ket{\phi^+} &= \ket{\phi^+}, \\ \label{bell-hadamard2}
        (H \otimes H)\ket{\phi^-} &= \ket{\psi^-}, \\ \label{bell-hadamard3}
        (H \otimes H)\ket{\psi^+} &= \ket{\phi^-}, \\ 
        (H \otimes H)\ket{\psi^-} &=\ket{\psi^+}. \label{bell-hadamard4}
    \end{align}
\end{enumerate}
We observe from the above equations that the entanglement property of Bell states is preserved after applying the Hadamard operation to each qubit of the Bell pair. This relationship between Bell states and the Hadamard operation is utilized to detect eavesdroppers, enhancing the security of the scheme.

\subsection{Steps of our improved scheme}

Let $S$ be the classical bit string representing Alice's secret, with $N$ its length. Let also $M$ be the number of participants with whom Alice shares the secret $S$.

\subsubsection*{Step 01 : Encoding step} 

Alice first transforms the secret $S$ into the classical sequence $K_A$ by embedding $S$ with $N$ additional random bits, placed at random positions. Note that Alice saves the index of each added bit. $K_A$ is thus of length $2N$.

Based on $K_A$, Alice prepares a sequence of $2N$ entangled states of $M$ particles in one of the following states:

\begin{align}
    \ket{\psi_0} &=\frac{1}{\sqrt{2}} \Big(\ket{+}^{\otimes M} + \ket{-}^{\otimes M}\Big)\\
    & = \frac{1}{\sqrt{2}} \Big( \ket{+}_1 \otimes \ket{+}_2 \otimes ... \otimes \ket{+}_{M}  \Big) \nonumber\\
    & \hspace{1cm}  + \frac{1}{\sqrt{2}} \Big( \ket{-}_1 \otimes \ket{-}_2 \otimes ... \otimes \ket{-}_{M}  \Big) \nonumber
\end{align}

\begin{align}
    \ket{\psi_1} &=\frac{1}{\sqrt{2}} \Big(\ket{+}^{\otimes M} - \ket{-}^{\otimes M}\Big)\\
    & = \frac{1}{\sqrt{2}} \Big( \ket{+}_1 \otimes \ket{+}_2 \otimes ... \otimes \ket{+}_{M}  \Big) \nonumber\\
    & \hspace{1cm}  - \frac{1}{\sqrt{2}} \Big( \ket{-}_1 \otimes \ket{-}_2 \otimes ... \otimes \ket{-}_{M}  \Big), \nonumber
\end{align}

where, $\ket{\psi_0}$ encodes the classical bit 0 and $\ket{\psi_1}$ encodes the classical bit 1.




\subsubsection*{Step 02 : Decoy insertion step}

Alice splits all the generated qubits into $M$ sequences in the state 

\begin{align}
    P_i = p_i^1 \otimes p_i^2 \ldots \otimes p_i^{2N},
\end{align}
where $i \in \{1, \cdots, M\}$. 

$p_i^j$ represents the state of the $i-th$ particle of the $j-th$ entangled state and $P_i$ represents the state of the qubits destined to be sent to the $i-th$ participant in step 03. 

Alice generates $M$ groups of decoy Bell pairs $G_{1}, G_{2}, \cdots, G_{M}$, where each group contains $K$ Bell pairs, each randomly chosen from the set $\{\ket{\phi^+}, \ket{\phi^-}, \ket{\psi^+}, \ket{\psi^-}\}$. For each group $G_i$ Alice divides them into two sequences of particles $D_{A_i}$ and $D_i$ such as:

\begin{align}
    D_{A_i} &= d_{A_i}^{1} \otimes d_{A_i}^{2} \ldots \otimes d_{A_i}^{K}, \\
    D_i &= d_i^1 \otimes d_i^2 \ldots \otimes d_i^{K},
\end{align}
where $d_{A_i}^{k}$ represents the state of the first particle of the $k-th$ Bell state in the $i-th$ group $G_i$, while  $d_i^k$ represents the state of the second particle of the $k-th$ Bell state in the $i-th$ group $G_i$. Alice keeps all the $M$ sequences of first particles $D_{A_i}^i$ to herself and takes each of the other $M$ sequences of second particles $D_i$, embedding them into the sequences $P_i$ at random position. This results in $M$ new sequences, denoted as $P^\prime_i$. Note that Alice keeps track of the indices of these inserted particles.

\subsubsection*{Step 03 : Eavesdropping check} 

Alice sends each $P_i^\prime$ sequence to the $i-th$ participant. After all participants confirm receipt, Alice reveals the positions of the decoy qubits in each sequence $P^\prime_i$ but not the states in which each Bell pair is prepared.

Each participant $i$ randomly performs one of the two operations on each decoy qubits:
\begin{itemize}
    \item The participant perform the Hadamard operation followed by a measurement in the $Z$ basis. We denote this as the \textbf{MH} operation.

    \item The participant directly performs a measurement in the $Z$ basis. We call this operation the \textbf{M} operation

\end{itemize}

All participants announce their operations to Alice via the public channel. Alice performs the same operations on the particles she retained. She then instructs the participants to reveal their measurement outcomes via the public channel. If no eavesdropper is present, Alice expects her measurement outcomes and the participants' measurement results to fulfill Eqs (\ref{bell1}-\ref{bell-hadamard4}).
 
 For instance, assume the Bell state between Alice and the $i$-th participant, Bob, was in the state $\ket{\phi^-}$. If both Alice and Bob perform the \textbf{M} operation on their respective qubits, Alice expects her measurement result to be correlated with Bob's measurement outcome, as shown in Eq (\ref{bell2}). Conversely, if they both perform the \textbf{MH} operation, Alice expects both measurement outcomes to be anti-correlated, as displayed in Eq (\ref{bell-hadamard2}).
 
 Alice analyzes the error rate; if it exceeds a predetermined threshold, the protocol aborts. Otherwise, the protocol proceeds to the next step.

 \subsubsection*{Step 04 : Validity check} 
 
 All participants measure the qubits of their sequences in the $Z$ basis, the measurement results $r_i$ of each state $\ket{\psi}$ exhibit the following correlations :
     \begin{align} \label{measurement correlation}
    r_1 \oplus r_2 \oplus \dots \oplus r_M = \begin{cases}
        0, & \text{if} \quad \ket{\psi} = \ket{\psi_0} \\
        1, & \text{if} \quad \ket{\psi} = \ket{\psi_1} 
    \end{cases}
      \end{align}
       resulting in each obtaining a random key sequence $K_i$. To validate the shared secret, Alice reveals the position of random $N$ bits that she placed beforehand in her sequence $K_A$. The participants then select those test bits from their respective $K_i$ and perform the XOR operation to verify the relationship :

 \begin{equation}
     K_{A}^{test} = K_{1}^{test} \oplus K_{2}^{test} \oplus \dots \oplus K_{n}^{test}.
 \end{equation}
 
 They declare their results to Alice via the public channel. If the relationship holds, they discard the test bits and use the remaining ones to recover Alice's secret $S$ using the relation: 

 \begin{equation} \label{secret recovery}
     K_{A} = K_{1} \oplus K_{2} \oplus \dots \oplus K_{n}.
 \end{equation}
Note that the participants must not use the public channel to reveal their individual shares; otherwise, an eavesdropper would be able to recover Alice's secret. The recovery of Alice's secret $S$ S could be done, for example, by requiring the participants to meet in person to exchange their shares.

\section{Security analysis} \label{security}

In this section, we analyze the security of the proposed protocol against both external and internal attacks. The attacker's goal in both cases is to steal Alice's secret $S$ without being detected. Since our protocol is an ($M$, $M$) threshold scheme, the attacker must obtain all participants' shares to recover the secret.  We demonstrate that the protocol can withstand external or internal threat.

\subsection{External attack}
In this scenario, we consider an external eavesdropper, Eve, who attempts to obtain Alice's secret $S$ without detection. To do so, Eve would need to know all participants' shares or the initial states prepared by Alice. To achieve this, She might employ well-known attacks such as the double CNOT attack, intercept-and-resend attack, collective attack, or Trojan horse attack. Since this protocol adopts a one-way communication via the quantum channel, Eve can only launch her attack on the transmitted particles in step 3 when the Alice sends the sequences $P_{i}^\prime$ to the participants. Given that this protocol relies solely on decoy Bell states to detect eavesdroppers, we focus on the impact of these attacks on the decoy states.

\subsubsection{Double CNOT attack}

The double CNOT attack (DCNA) involves using two CNOT gates to extract information from a quantum system. When the sender transmits a qubit to the receiver, Eve intercepts it and performs a CNOT gate, using the transmitted qubit as the control and her ancillary qubit (initially in the state $\ket{0}$) as the target. We define the CNOT gate as follow:

\begin{align}
    U_{CNOT} &= \ketbra{00} + \ketbra{01}{01} + \ketbra{11}{10} + \ketbra{10}{11}, \\
             &= \begin{pmatrix}
            1 & 0 & 0 & 0 \\
            0 & 1 & 0 & 0 \\
            0 & 0 & 0 & 1 \\
            0 & 0 & 1 & 0
        \end{pmatrix}. \nonumber
\end{align}
This operation entangles the transmitted qubit with the ancilla.  After receiving the qubit, the receiver may perform some operation on the received qubit and return it to the sender. Eve intercepts it again and applies a second CNOT gate, using the intercepted qubit as the control and the ancilla as the target. This disentangles the qubit from the ancilla. Eve then measures her ancillary qubit in the appropriate basis to gain information about the state of the transmitted qubit.

In the proposed protocol, participants do not return any particles to Alice via the quantum channel, limiting Eve to performing the CNOT gate only once. Without loss of generality, consider a scenario where Alice shares her secret with three participants: Bob, Charlie, and Diana. She encodes her classical bits "0" and "1" respectively in the following states:

\begin{align}
    \ket{\psi_0} &= \frac{1}{\sqrt{2}} \big(\ket{+++} + \ket{---}\big), \nonumber \\
                 &= \frac{1}{\sqrt{2}} \big(\ket{000} + \ket{011} + \ket{101} + \ket{110}\big).
\end{align}

\begin{align}
     \ket{\psi_1} &= \frac{1}{\sqrt{2}} \big(\ket{+++} - \ket{---}\big), \nonumber \\
                  &= \frac{1}{\sqrt{2}} \big(\ket{001} + \ket{010} + \ket{100} + \ket{111}\big).
\end{align}
Eve intercepts each qubit sent from Alice to the participants via the quantum channel. She generates an ancillary qubit in the state $\ket{0}_e$ for each intercepted particle and performs the CNOT gate,  where Alice's qubit is the control qubit and Eve's qubit $\ket{0}_e$ is the target qubit. According to Alice's quantum state, the qubit systems become the following:

\begin{align} \label{cnot on phi0}
    U_{CNOT} \ket{\psi_0} &= \frac{1}{\sqrt{2}} \big(\ket{000}_{bcd}\ket{000}_e + \ket{011}_{bcd}\ket{011}_e + \ket{101}_{bcd}\ket{101}_e + \ket{110}_{bcd}\ket{110}_e\big), \\ \label{cnot on phi1}
    U_{CNOT} \ket{\psi_1} &= \frac{1}{\sqrt{2}} \big(\ket{001}_{bcd}\ket{001}_e + \ket{010}_{bcd}\ket{010}_e + \ket{100}_{bcd}\ket{100}_e + \ket{111}_{bcd}\ket{111}_e\big), \\ \label{cnot on phi+}
    U_{CNOT} \ket{\phi^+} &= \frac{1}{\sqrt{2}} \big(\ket{00}_{ap}\ket{0}_e + \ket{11}_{ap}\ket{1}_e \big), \\ \label{cnot on phi-}
    U_{CNOT} \ket{\phi^-} &= \frac{1}{\sqrt{2}} \big(\ket{00}_{ap}\ket{0}_e - \ket{11}_{ap}\ket{1}_e \big), \\ \label{cnot on psi+}
    U_{CNOT} \ket{\psi^+} &= \frac{1}{\sqrt{2}} \big(\ket{01}_{ap}\ket{1}_e + \ket{10}_{ap}\ket{0}_e \big), \\ \label{cnot on psi-}
    U_{CNOT} \ket{\psi^-} &= \frac{1}{\sqrt{2}} \big(\ket{01}_{ap}\ket{1}_e + \ket{10}_{ap}\ket{0}_e \big), 
\end{align}
where the subscript $b$, $c$, $d$, $e$ refer to the qubits of Bob, Charlie, Diana, and Eve, respectively, and the subscript $ap$ indicates the decoy Bell state that Alice can share with one of the participants.

In the following scenario, we observe from Eqs. (\ref{cnot on phi0}) and (\ref{cnot on phi1}) that when the participants measure their respective qubits in the $Z$ basis, they obtain one of the following outcomes for each state:

\begin{align}
    U_{CNOT} \ket{\psi_0} \stackrel{Measure}{\longrightarrow} \begin{cases}
        \ket{000}_{bcd}\ket{000}_{e},  \\
        \ket{011}_{bcd}\ket{011}_{e},  \\
        \ket{101}_{bcd}\ket{101}_{e},  \\
        \ket{110}_{bcd}\ket{110}_{e}. 
    \end{cases}
      \end{align}

\begin{align}
    U_{CNOT} \ket{\psi_1} \stackrel{Measure}{\longrightarrow} \begin{cases}
        \ket{001}_{bcd}\ket{001}_{e},  \\
        \ket{010}_{bcd}\ket{010}_{e},  \\
        \ket{100}_{bcd}\ket{100}_{e},  \\
        \ket{111}_{bcd}\ket{111}_{e}.  
    \end{cases}
      \end{align}
      From the two above equations, we notice two things:
      \begin{enumerate}
          \item The measurement results of the participants exhibit the correlations shown in Eq. (\ref{measurement correlation}).
          \item The state of Eve's ancillary qubits is always correlated with the participants' measurement results, meaning Eve can simply measure her ancillary qubits in the $Z$ basis to obtain the participants' individual shares. However, her attack will inevitably be detected with the help of the decoy pairs. 
      \end{enumerate}

In step 3 of the protocol, during the eavesdropping check, Alice reveals the positions of the decoy particles. Each participant then randomly and independently performs either the \textbf{MH} operation, where they apply the Hadamard operation followed by a $Z$ basis measurement on the decoy qubit, or the \textbf{M} operation, where they directly measure the received qubit in the $Z$ basis. Participants reveal their operations to Alice via the public channel, and Alice performs the same operation on her part of the decoy pair. This results in two situations for each participant:  

\begin{enumerate}
    \item \textbf{Situation 01:} Both Alice and the participant (let's assume it's Bob) perform the \textbf{M} operation on their respective parts of the Bell pair. In this situation, we have the following possible measurement outcomes according to Eqs. (\ref{cnot on phi+}-\ref{cnot on psi-}):
\begin{align}
    U_{CNOT} \ket{\phi^+} & \stackrel{Measure}{\longrightarrow} \begin{cases}
        \ket{00}_{ap}\ket{0}_{e},  \\
        \ket{11}_{ap}\ket{1}_{e}.
    \end{cases} \\
     U_{CNOT} \ket{\phi^-} & \stackrel{Measure}{\longrightarrow} \begin{cases}
        \ket{00}_{ap}\ket{0}_{e},  \\
        \ket{11}_{ap}\ket{1}_{e}.
    \end{cases} \\ 
     U_{CNOT} \ket{\psi^+} & \stackrel{Measure}{\longrightarrow} \begin{cases}
        \ket{01}_{ap}\ket{1}_{e},  \\
        \ket{10}_{ap}\ket{0}_{e}.
    \end{cases} \\
     U_{CNOT} \ket{\psi^-} & \stackrel{Measure}{\longrightarrow} \begin{cases}
        \ket{01}_{ap}\ket{1}_{e},  \\
        \ket{10}_{ap}\ket{0}_{e}.
    \end{cases}
      \end{align}
From the above equations, we observe that Alice's and Bob's measurement outcomes fulfill Eqs. (\ref{bell1}-\ref{bell4}) as expected. Therefore, Eve goes undetected when Alice and Bob both perform the \textbf{M} operation.

\item \textbf{Situation 02:} Both Alice and Bob perform the \textbf{MH} operation. After applying the Hadamard operation, Eqs. (\ref{cnot on phi+}-\ref{cnot on psi-}) become:

\begin{align}
     (H \otimes H \otimes I) U_{CNOT} \ket{\phi^+} &= \frac{1}{\sqrt{2}} \big(\ket{++}_{ap}\ket{0}_e + \ket{--}_{ap}\ket{1}_e \big), \nonumber \\
           &= \frac{1}{2} \Biggl(\frac{(\ket{00}+ \ket{11})_{ap}}{\sqrt{2}}\ket{+}_{e} + \frac{(\ket{01}+ \ket{10})_{ap}}{\sqrt{2}}\ket{-}_{e}\Biggl).
\end{align}
 \begin{align}
      (H \otimes H \otimes I) U_{CNOT} \ket{\phi^-} &= \frac{1}{\sqrt{2}} \big(\ket{++}_{ap}\ket{0}_e - \ket{--}_{ap}\ket{1}_e \big), \nonumber \\
           &= \frac{1}{2} \Biggl(\frac{(\ket{00}+ \ket{11})_{ap}}{\sqrt{2}}\ket{-}_{e} + \frac{(\ket{01}+\ket{10})_{ap}}{\sqrt{2}}\ket{+}_{e}\Biggl).
 \end{align}
\begin{align}
    (H \otimes H \otimes I) U_{CNOT} \ket{\psi^+} &= \frac{1}{\sqrt{2}} \big(\ket{+-}_{ap}\ket{0}_e + \ket{-+}_{ap}\ket{1}_e \big), \nonumber \\
           &= \frac{1}{2} \Biggl(\frac{(\ket{00}- \ket{11})_{ap}}{\sqrt{2}}\ket{+}_{e} + \frac{(\ket{01}- \ket{10})_{ap}}{\sqrt{2}}\ket{-}_{e}\Biggl).
\end{align}
\begin{align}
    (H \otimes H \otimes I) U_{CNOT} \ket{\phi^+} &= \frac{1}{\sqrt{2}} \big(\ket{++}_{ap}\ket{0}_e + \ket{--}_{ap}\ket{1}_e \big), \nonumber \\
           &= -\frac{1}{2} \Biggl(\frac{(\ket{00}- \ket{11})_{ap}}{\sqrt{2}}\ket{-}_{e} + \frac{(\ket{01}- \ket{10})_{ap}}{\sqrt{2}}\ket{+}_{e}\Biggl).
\end{align}
      From these equations, we see that when Alice and Bob perform their $Z$ basis measurement, they have an equal probability of obtaining one of the four states $\{\ket{00}, \ket{01}, \ket{10}, \ket{11}\}$. Thus, for each Bell state, Alice and Bob have a $\frac{1}{2}$ probability of obtaining a result that fulfills Eqs. (\ref{bell-hadamard1}-\ref{bell-hadamard4}). Consequently, when Alice and Bob choose the \textbf{MH} operation, Eve has a $\frac{1}{2}$ probability of being detected.
\end{enumerate}

Overall, the probability of Eve being detected when launching the CNOT attack is $1-(\frac{1}{4})^K$, where $K$ is the number of decoy Bell pair for each participant. This probability approaches 1 when $K$ is sufficiently large.

\subsubsection{Intercept and resend attack}

In this attack, Eve may take two different strategies to obtain useful information about the participants' shares:

\begin{enumerate}
    \item \textbf{Eavesdropping strategy 01:} In this first strategy, Eve intercepts each sequence $P_{i}\prime$ transmitted from Alice to the participants. She stores them in her quantum memory and sends previously prepared fake particle sequences to the participants. Eve then waits for Alice to announce the positions of the decoy particles and test bits, discards these from the sequences $P_{i}^\prime$ and measures the remaining particles in the $Z$ basis to recover Alice's secret $S$ using Eq. (\ref{secret recovery}). However, this strategy fails because Eve does not know the positions and states of the decoy particles when she sends the fake sequences. By replacing the sequences $P_{i}^\prime$ with fake ones, she destroys the correlation properties between Alice's and the participants' measurement outcomes on the decoy Bell pairs. Therefore, any fake sequences Eve sends will introduce errors during the eavesdropping check, exposing her presence.

    \item \textbf{Eavesdropping strategy 02:} In this second strategy, Eve intercepts the sequences $P_{i}^\prime$, measures them in the $Z$ basis, and sends the measured sequences to the participants. She records her measurement outcomes and ends up with $M$ bit string sequences. After Alice announces the positions of the decoy particles and test bits, Eve discards those positions from her sequences and uses the remaining measurement results to reconstruct Alice's secret $S$ using Eq. (\ref{secret recovery}). However, this strategy also fails because Eve cannot distinguish the decoy particles from the entangled qubits in the sequences $P_{i}^\prime$, as they are all in the maximally mixed state:
    \begin{align}
        \rho = \frac{1}{2} \big(\ketbra{0} + \ketbra{1}\big).
    \end{align}
    Therefore, Eve will be detected with a probability of $\frac{1}{2}$ when Alice and a participant perform the \textbf{MH} operation. Overall, this eavesdropping strategy has a detection probability of $1-(\frac{1}{4})^K$, where $K$ is the number of decoy Bell state for each participant. This probability approaches 1 when $K$ is sufficiently large. 

    In fact, Eve can implement measurements in any basis she chooses, but since she cannot distinguish the decoy particles from the entangled qubits, she will inevitably introduce errors that will be detected during the eavesdropping check.

\end{enumerate}

\subsubsection{Collective attack}

We define the collective attack as follows \cite{Iqbal2020, Krawec2015}:

\begin{itemize}
    \item Eve entangles an ancillary qubit with each qubit sent from Alice and the participants through the quantum channels. Eve then measures these ancillary qubits to gather information about the participants' secret shares.
    \item Each qubit sent by Alice is attacked by Eve independently using the same strategy.
    \item Eve is free to keep her ancillary qubits in a quantum memory and measure them at an ulterior time.
\end{itemize}

As mentioned before, since this protocol uses decoy Bell states to detect eavesdroppers, we only consider the effect of the collective attack on those decoy Bell states. The security analysis is quite similar to the security analysis of the mediated semi-quantum QKD protocol of Tsai and Yang \cite{Tsai2021}.

\begin{theorem}
    Assume Eve performs the collective attack on the qubits sent from Alice to the participants by applying a unitary operation $U_E$ on the transmitted qubits and their associated ancillary qubits, such that $U^{\dagger}_{e}U_{e}= I$, where $I$ is the identity matrix. Eve cannot obtain any valuable information about the participants' secret shares without disturbing the original quantum systems, resulting in a nonzero probability of being detected during the eavesdropping check.
\end{theorem}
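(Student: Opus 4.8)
The plan is to show that any unitary $U_E$ Eve applies to the decoy qubit and her ancilla, if it leaves no detectable trace during the eavesdropping check, must act trivially (up to a global factor) on the decoy qubit, decoupling the ancilla from it and hence yielding no information. I would begin by writing the most general action of $U_E$ on a single decoy qubit tensored with Eve's ancilla $\ket{E}$. Since the decoy qubit belongs to a Bell pair shared with Alice, the natural approach is to track $U_E$ on the $Z$-basis states,
\begin{align}
    U_E \ket{0}\ket{E} &= \alpha\ket{0}\ket{e_{00}} + \beta\ket{1}\ket{e_{01}}, \nonumber \\
    U_E \ket{1}\ket{E} &= \gamma\ket{0}\ket{e_{10}} + \delta\ket{1}\ket{e_{11}}, \nonumber
\end{align}
where $\ket{e_{ij}}$ are (unnormalized) ancilla states and unitarity imposes the usual normalization and orthogonality constraints on the coefficients and ancillas. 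This is exactly the setup used in the cited Tsai--Yang analysis, so I would lean on that template.

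Next I would impose the no-disturbance condition coming from the \textbf{M} operation (Situation 01). Because the decoy pairs are drawn from all four Bell states and Alice knows the true correlations from Eqs.~(\ref{bell1}--\ref{bell4}), demanding zero error when both parties measure directly in the $Z$ basis forces the off-diagonal ancilla branches to vanish: the $\ket{0}\to\ket{1}$ and $\ket{1}\to\ket{0}$ transition amplitudes must be zero, i.e.\ $\beta = \gamma = 0$. This leaves $U_E\ket{0}\ket{E} = \alpha\ket{0}\ket{e_{00}}$ and $U_E\ket{1}\ket{E} = \delta\ket{1}\ket{e_{11}}$, so after the \textbf{M} check the ancilla is at worst correlated with the computational value but the qubit is not flipped.

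The crucial step is then the \textbf{MH} condition (Situation 02). I would propagate the surviving operator through the Hadamard transform on both Bell qubits and re-express everything in the $\{\ket{+},\ket{-}\}$ basis, using Eqs.~(\ref{bell-hadamard1}--\ref{bell-hadamard4}) for the expected correlations after $H\otimes H$. The requirement that these correlations also hold with zero error constrains $\ket{e_{00}}$ and $\ket{e_{11}}$: the cross terms that would otherwise flip an $\ket{+}$ into $\ket{-}$ (and thereby violate the post-Hadamard Bell correlation) must cancel, which can only happen if $\ket{e_{00}} = \ket{e_{11}}$ (up to phase) and $\alpha = \delta$. Combining the two situations, $U_E$ acts as $\alpha\,\mathbb{I}\otimes\ket{e_{00}}$ on the decoy qubit, meaning Eve's ancilla ends in a state independent of the qubit's value, so a measurement of the ancilla yields nothing about the participant's share.

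The main obstacle I anticipate is the \textbf{MH} step: carefully expanding $U_E$ in the diagonal basis, applying $H$, and collecting the coefficients of the four post-measurement outcomes without sign errors is the delicate part, and it is there that one must show that \emph{any} nonzero information gain (a genuine difference between $\ket{e_{00}}$ and $\ket{e_{11}}$) produces a strictly positive error probability. I would close the argument by exhibiting that nonzero detection probability explicitly as a function of $\||e_{00}\rangle - |e_{11}\rangle\|$, proving the contrapositive: undetectability forces the ancilla states to coincide, hence no information, which is precisely the theorem's claim.
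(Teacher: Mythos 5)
Your proposal is correct and follows essentially the same route as the paper: the same general parametrization of $U_E$ on $Z$-basis states with ancilla branches $\ket{e_{ij}}$, the same use of the \textbf{M} check to force the bit-flip amplitudes to vanish, and the same use of the \textbf{MH} check to force $a\ket{e_{00}} = d\ket{e_{11}}$, whence the ancilla decouples and Eve learns nothing. Your closing suggestion to quantify the detection probability in terms of $\lVert\ket{e_{00}} - \ket{e_{11}}\rVert$ would actually be a modest sharpening of the paper's argument, which only establishes the dichotomy qualitatively.
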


\begin{proof}

In her attack, Eve intercepts the sequences $P_{i}^\prime$ when they are transmitted from Alice to participants through the quantum channel. She then apply a unitary operation $U_E$ on each qubit of the sequences with its prepared ancillary qubit $E$, initially in the the state $\ket{e}$, to entangle them. The unitary operation $U_E$ is defined as follow:

\begin{align} \label{U_E on 0}
    U_{E} (\ket{0} \otimes \ket{e}) &= a \ket{0}\ket{e_{00}} + b\ket{1} \ket{e_{01}}, \\ \label{U_E on 1}
    U_{E} (\ket{1} \otimes \ket{e}) &= c \ket{0}\ket{e_{10}} + d \ket{1}\ket{e_{11}},
\end{align}
such that $\abs{a}^{2}+\abs{b}^{2}=1$, $\abs{c}^{2}+\abs{d}^{2}=1$ and the states $\ket{e_{ij}}$ are orthogonal states that Eve can distinguish.

In this proof, we consider, without loss of generalization, the case where Eve intercepts Bob's sequence $P_{B}^\prime$. Since Eve uses the same strategy to attack the other sequences $P_{i}^\prime$ this proof is valid for her attacks on the sequences of all participants. We denote the qubits kept by Alice by $A$ and those sent to Bob by $B$. The states of the decoy Bell pairs before the attack is as follow:

\begin{align}
    \ket{\phi^+}\otimes\ket{e} &= \frac{1}{\sqrt{2}} \Big(\ket{0}_A\ket{0}_B\ket{e}+\ket{1}_A\ket{1}_B\ket{e}\Big), \\
    \ket{\phi^-}\otimes\ket{e} &= \frac{1}{\sqrt{2}} \Big(\ket{0}_A\ket{0}_B\ket{e}-\ket{1}_A\ket{1}_B\ket{e}\Big), \\
    \ket{\psi^+}\otimes\ket{e} &= \frac{1}{\sqrt{2}} \Big(\ket{0}_A\ket{1}_B\ket{e}+\ket{1}_A\ket{0}_B\ket{e}\Big), \\
    \ket{\phi^-}\otimes\ket{e} &= \frac{1}{\sqrt{2}} \Big(\ket{0}_A\ket{1}_B\ket{e}-\ket{1}_A\ket{0}_B\ket{e}\Big).
\end{align}
After Eve has performed $U_E$ on the system $B+E$, the above equations evolve as follow:

\begin{align}
    (I \otimes U_E) \ket{\phi^+}\otimes\ket{e} &= \frac{1}{\sqrt{2}} \Big(\ket{0}_A\big(U_E\ket{0}_B\ket{e}\big)+\ket{1}_A\big(U_E\ket{1}_B\ket{e}\big)\Big), \nonumber \\
    &= \frac{1}{\sqrt{2}} \Big(a\ket{00}_{AB}\ket{e_{00}} + b\ket{01}_{AB}\ket{e_{01}} + c\ket{10}_{AB}\ket{e_{10}} + d\ket{11}_{AB}\ket{e_{11}} \Big).
\end{align}

\begin{align}
    (I \otimes U_E) \ket{\phi^-}\otimes\ket{e} &= \frac{1}{\sqrt{2}} \Big(\ket{0}_A\big(U_E\ket{0}_B\ket{e}\big)-\ket{1}_A\big(U_E\ket{1}_B\ket{e}\big)\Big), \nonumber \\
    &=\frac{1}{\sqrt{2}} \big(a\ket{00}_{AB}\ket{e_{00}} + b\ket{01}_{AB}\ket{e_{01}} - c\ket{10}_{AB}\ket{e_{10}} - d\ket{11}_{AB}\ket{e_{11}} \big).
\end{align}

\begin{align}
    (I \otimes U_E) \ket{\psi^+}\otimes\ket{e} &= \frac{1}{\sqrt{2}} \Big(\ket{0}_A\big(U_E\ket{1}_B\ket{e}\big)+\ket{1}_A\big(U_E\ket{0}_B\ket{e}\big)\Big), \nonumber \\
    &=\frac{1}{\sqrt{2}} \big(c\ket{00}_{AB}\ket{e_{10}} + d\ket{01}_{AB}\ket{e_{11}} + a\ket{10}_{AB}\ket{e_{00}} + b\ket{11}_{AB}\ket{e_{01}} \big).
\end{align}

\begin{align}
    (I \otimes U_E) \ket{\phi^-}\otimes\ket{e} &= \frac{1}{\sqrt{2}} \Big(\ket{0}_A\big(U_E\ket{1}_B\ket{e}\big)-\ket{1}_A\big(U_E\ket{0}_B\ket{e}\big)\Big), \nonumber \\
    &=\frac{1}{\sqrt{2}} \big(c\ket{00}_{AB}\ket{e_{10}} + d\ket{01}_{AB}\ket{e_{11}} - a\ket{10}_{AB}\ket{e_{00}} - b\ket{11}_{AB}\ket{e_{01}} \big).
\end{align}

These are the states that Alice, Bob, and Eve share among themselves. We now consider two situations:

\begin{enumerate}
    \item \textbf{Situation 01:} Alice and Bob directly measure their respective qubits in the $Z$ basis, i.e., perform the \textbf{M} operation.
    \item \textbf{Situation 02:} Alice and Bob perform the Hadamard operation $H$ on their respective qubits before performing the $Z$ basis measurement, i.e., perform the \textbf{MH} operation.
\end{enumerate}
When both Alice and Bob perform the Hadamard operation on their respective qubits, the above states evolve, respectively, as follows:

\begin{align} \label{HH on UE 1}
    \frac{1}{2\sqrt{2}}\left(\begin{array}{c}
\ket{00}_{AB} \otimes \big(a\ket{e_{00}} + b\ket{e_{01}} + c\ket{e_{10}} + d\ket{e_{11}}\big)+ \\
\ket{01}_{AB} \otimes \big(a\ket{e_{00}} - b\ket{e_{01}} + c\ket{e_{10}} - d\ket{e_{11}}\big)+ \\
\ket{10}_{AB} \otimes \big(a\ket{e_{00}} + b\ket{e_{01}} - c\ket{e_{10}} - d\ket{e_{11}}\big)+ \\
\ket{11}_{AB} \otimes \big(a\ket{e_{00}} - b\ket{e_{01}} - c\ket{e_{10}} + d\ket{e_{11}}\big)
\end{array}\right)
\end{align}

\begin{align} \label{HH on UE 2}
    \frac{1}{2\sqrt{2}}\left(\begin{array}{c}
\ket{00}_{AB} \otimes \big(a\ket{e_{00}} + b\ket{e_{01}} - c\ket{e_{10}} - d\ket{e_{11}}\big)+ \\
\ket{01}_{AB} \otimes \big(a\ket{e_{00}} - b\ket{e_{01}} - c\ket{e_{10}} + d\ket{e_{11}}\big)+ \\
\ket{10}_{AB} \otimes \big(a\ket{e_{00}} + b\ket{e_{01}} + c\ket{e_{10}} + d\ket{e_{11}}\big)+ \\
\ket{11}_{AB} \otimes \big(a\ket{e_{00}} - b\ket{e_{01}} + c\ket{e_{10}} - d\ket{e_{11}}\big)
\end{array}\right)
\end{align}

\begin{align}\label{HH on UE 3}
    \frac{1}{2\sqrt{2}}\left(\begin{array}{c}
\ket{00}_{AB} \otimes \big(a\ket{e_{00}} + b\ket{e_{01}} + c\ket{e_{10}} + d\ket{e_{11}}\big)+ \\
\ket{01}_{AB} \otimes \big(a\ket{e_{00}} - b\ket{e_{01}} + c\ket{e_{10}} - d\ket{e_{11}}\big)+ \\
\ket{10}_{AB} \otimes \big(-a\ket{e_{00}} - b\ket{e_{01}} + c\ket{e_{10}} + d\ket{e_{11}}\big)+ \\
\ket{11}_{AB} \otimes \big(-a\ket{e_{00}} + b\ket{e_{01}} + c\ket{e_{10}} - d\ket{e_{11}}\big)
\end{array}\right)
\end{align}

\begin{align} \label{HH on UE 4}
    \frac{1}{2\sqrt{2}}\left(\begin{array}{c}
\ket{00}_{AB} \otimes \big(-a\ket{e_{00}} - b\ket{e_{01}} + c\ket{e_{10}} + d\ket{e_{11}}\big)+ \\
\ket{01}_{AB} \otimes \big(-a\ket{e_{00}} - b\ket{e_{01}} + c\ket{e_{10}} - d\ket{e_{11}}\big)+ \\
\ket{10}_{AB} \otimes \big(a\ket{e_{00}} + b\ket{e_{01}} + c\ket{e_{10}} + d\ket{e_{11}}\big)+ \\
\ket{11}_{AB} \otimes \big(a\ket{e_{00}} - b\ket{e_{01}} + c\ket{e_{10}} - d\ket{e_{11}}\big)
\end{array}\right)
\end{align}

In step 3 of the proposed protocol, during the eavesdropping check, after Alice and the participants have performed their operations, Bob reveals his measurement results for each Bell state, and Alice compares them with her own. Alice expects her and Bob's measurement results to satisfy Eqs. (\ref{bell1}-\ref{bell-hadamard4}). For Eve to ensure that, she must adjust $U_E$ accordingly.

If Eve adjusts $U_E$ for the first situation, she must set:

\begin{equation} \label{1st condition}
    b = c = 0,
\end{equation}
to pass the public discussion. However, these conditions imply that the states of the quantum systems in the second scenario will be given as follows:

\begin{align}
    \frac{1}{2\sqrt{2}}\left(\begin{array}{c}
\ket{00}_{AB} \otimes \big(a\ket{e_{00}} + d\ket{e_{11}}\big)+ \\
\ket{01}_{AB} \otimes \big(a\ket{e_{00}} - d\ket{e_{11}}\big)+ \\
\ket{10}_{AB} \otimes \big(a\ket{e_{00}} - d\ket{e_{11}}\big)+ \\
\ket{11}_{AB} \otimes \big(a\ket{e_{00}} + d\ket{e_{11}}\big)
\end{array}\right)
\end{align}

\begin{align}
    \frac{1}{2\sqrt{2}}\left(\begin{array}{c}
\ket{00}_{AB} \otimes \big(a\ket{e_{00}} - d\ket{e_{11}}\big)+ \\
\ket{01}_{AB} \otimes \big(a\ket{e_{00}} + d\ket{e_{11}}\big)+ \\
\ket{10}_{AB} \otimes \big(a\ket{e_{00}} + d\ket{e_{11}}\big)+ \\
\ket{11}_{AB} \otimes \big(a\ket{e_{00}} - d\ket{e_{11}}\big)
\end{array}\right)
\end{align}

\begin{align}
    \frac{1}{2\sqrt{2}}\left(\begin{array}{c}
\ket{00}_{AB} \otimes \big(a\ket{e_{00}} + d\ket{e_{11}}\big)+ \\
\ket{01}_{AB} \otimes \big(a\ket{e_{00}} - d\ket{e_{11}}\big)+ \\
\ket{10}_{AB} \otimes \big(-a\ket{e_{00}} + d\ket{e_{11}}\big)+ \\
\ket{11}_{AB} \otimes \big(-a\ket{e_{00}} - d\ket{e_{11}}\big)
\end{array}\right)
\end{align}

\begin{align}
    \frac{1}{2\sqrt{2}}\left(\begin{array}{c}
\ket{00}_{AB} \otimes \big(-a\ket{e_{00}} + d\ket{e_{11}}\big)+ \\
\ket{01}_{AB} \otimes \big(-a\ket{e_{00}} - d\ket{e_{11}}\big)+ \\
\ket{10}_{AB} \otimes \big(a\ket{e_{00}} + d\ket{e_{11}}\big)+ \\
\ket{11}_{AB} \otimes \big(a\ket{e_{00}} - d\ket{e_{11}}\big)
\end{array}\right)
\end{align}

To pass the security check, Eve must set the incorrect terms as a zero vector. This requires meeting the condition:

\begin{equation} \label{2nd condition}
    a\ket{e_{00}} - d\ket{e_{11}} = 0,
\end{equation}
which signifies $a\ket{e_{00}} = d\ket{e_{11}}$, which contradicts the orthogonality of $\ket{e_00}$ and $\ket{e_11}$, unless Eve sets $a = d = 0$. However, combining this with Eq. (\ref{1st condition}), it conflicts with:

\begin{equation} \label{normalization}
    \abs{a}^{2}+\abs{b}^{2}=\abs{c}^{2}+\abs{d}^{2}=1.
\end{equation}
In this scenario, where Eve cannot distinguish the vectors $\ket{e_{00}}$ and $\ket{e_{11}}$ and where $b=c=0$, if we set $a=d=1$ to satisfy Eq. (\ref{normalization}), the Eqs. (\ref{U_E on 0})-(\ref{U_E on 1}) can be written as:

\begin{align} \label{new U_E on 0}
    U_{E} (\ket{0} \otimes \ket{e}) &= \ket{0}\ket{e_{00}}, \\ \label{new U_E on 1}
    U_{E} (\ket{1} \otimes \ket{e}) &= \ket{1}\ket{e_{00}}.
\end{align}
We notice that Eve's ancilla are independent of the qubits attacked. Therefore, Eve cannot measure her ancillary qubit to extract information about Bob's measurement results without having a nonzero probability of being detected.

Suppose now that Eve adjusts $U_E$ to the second situation, implying that she must eliminate the incorrect terms in Eqs. (\ref{HH on UE 1}-\ref{HH on UE 4}). Thus, she must set:

\begin{equation}
    \begin{aligned}
    a\ket{e_{00}} - b\ket{e_{01}} + c\ket{e_{10}} - d\ket{e_{11}} &=0 \\
    a\ket{e_{00}} + b\ket{e_{01}} - c\ket{e_{10}} - d\ket{e_{11}} &=0.
\end{aligned}
\end{equation}

\begin{equation}
    \begin{aligned}
        a\ket{e_{00}} + b\ket{e_{01}} - c\ket{e_{10}} - d\ket{e_{11}} &=0 \\
    a\ket{e_{00}} - b\ket{e_{01}} + c\ket{e_{10}} - d\ket{e_{11}} &=0.
    \end{aligned}
\end{equation}

\begin{equation}
    \begin{aligned}
        a\ket{e_{00}} - b\ket{e_{01}} + c\ket{e_{10}} - d\ket{e_{11}} &=0 \\
    -a\ket{e_{00}} - b\ket{e_{01}} + c\ket{e_{10}} + d\ket{e_{11}} &=0.
    \end{aligned}
\end{equation}

\begin{equation}
    \begin{aligned}
        -a\ket{e_{00}} - b\ket{e_{01}} + c\ket{e_{10}} + d\ket{e_{11}} &=0 \\
    a\ket{e_{00}} - b\ket{e_{01}} + c\ket{e_{10}} - d\ket{e_{11}} &=0.
    \end{aligned}
\end{equation}

All those equations lead to the same condition:

\begin{equation}
        a\ket{e_{00}} - d\ket{e_{11}} = 0,
\end{equation}
This means Eve must set $b=c=0$ to avoid detection in the first situation. Therefore, whether Eve adjusts $U_E$ for the first or second situation, she ends up with the same conditions that leads to Eqs. (\ref{new U_E on 0}) and (\ref{new U_E on 1}).

\end{proof}

According to the above analysis, although Eve can determine a collective attack path $U_E$ that remains undetected by Alice, she cannot obtain any information regarding the participants' secret shares because $a\ket{e_{00}} = d\ket{e_{11}}$ and $b = c = 0$. Conversely, if Eve wishes to obtain useful information about the participants' secret shares, she cannot use $U_E$ to execute a collective attack. Doing so would induce a detectable disturbance, increasing the error rate and giving Alice a nonzero probability of detecting Eve's attack.

\subsubsection{Trojan horse attack}

In this subsection, we discuss the protocol's robustness against two types of Trojan horse attacks. The first type involves Eve using a "delayed photon", where she intercepts the qubits transmitted from Alice to the participants and attaches a probing photon with a delay time shorter than the time window. The delayed photon remains undetected by the participants' detectors. After a participant performs his corresponding operation and returns the qubit, Eve intercepts it again, separates the probing photon, and gains information about the participant's operation by measuring it. In the second type of attack, Eve attaches an "invisible photon" to each intercepted qubit. These photons are undetectable by the participants. When a participant performs an operation on his qubit, the invisible photon undergoes the same operation. Eve retrieves these photons and measures them to obtain information about the participants' operations. In both scenarios, Eve can extract full information about the participants' operations when retrieving the Trojan horse photons.

In the aforementioned attack methods, the attacker can only extract information about participant operations when they retrieve the Trojan horse photons. Two-way or circular communication protocols provide the attacker an opportunity to retrieve malicious photons introduced through these attacks. Thus, such protocols are considered vulnerable to Trojan horse attacks. In contrast, a one-way communication protocol does not offer the attacker a chance to collect the malicious photons, as no qubit is returned by the participant. Consequently, a one-way communication protocol is robust against Trojan horse attacks.

In the proposed protocol, a one-way communication scheme is adopted via the quantum channel, where qubits are sent only from Alice to the classical participants. Although Eve can insert probing photons into each transmitted qubit, the participants' secret messages remain secure because Eve cannot retrieve the probing photons she inserted. Thus, the protocol is robust against Trojan horse attacks without requiring participants to equip themselves with expensive devices, such as photon number splitters and optical wavelength filters, to mitigate these attacks.

\subsection{Internal attack}

In this subsection, we examine  the scenario where the attack comes from the inside. Dishonest participants are particularly dangerous because they actively participate in the protocol. Therefore, as mentioned in \cite{Qin2007}, internal threats require closer attention in the security analysis of SQSS protocols. We consider two cases of internal threats in this study. The first case is when a single malicious participant attempts to obtain Alice's secret $S$ without the help of the other participants. The second case is the extreme scenario where $M-1$ dishonest participants collude to obtain the secret without the collaboration of the honest participant, Bob.

\subsubsection{Participant's attack}
In the proposed protocol, all participants have identical roles. Without loss of generality, we assume Charlie possesses quantum capabilities and attempts to learn the participants' secret shares by attacking their sequences $P_{i}\prime$, where $i \in \{1, 3, 4, \cdots, M\}$. Since the protocol uses one-way communication via the quantum channel, Charlie can only launch her attack in step 3 when Alice transmits the sequence $P_{i}\prime$ to the participants. In this scenario, Charlie acts exactly as an external eavesdropper since she doesn't know the position of the decoy qubits in the sequence $P_{i}\prime$, as each particle in these sequences is in the maximally mixed state $\rho = (\ket{0}\bra{0} + \ket{1}\bra{1})/2$. Therefore, Charlie might launch attacks such the DCNA attack, intercept and resend attack, collective attack, the Trojan horse, at the end her attack will inevitably introduce errors and be detected by Alice as analyzed in the previous subsection. Thus, an attack from one malicious participant is ineffective in this protocol.

\subsubsection{Collusion attack}

In this scenario, we consider the extreme case where Bob is the only honest participant in the scheme, and the other $M-1$ dishonest participants collude to obtain Alice's secret $S$ without his help. 

A first strategy would be for the $M-1$ malicious participants to deduce Alice's secret by combining their respective key sequence $K_i$, such as $i \in \{2, 3, \cdots, N\}$, they obtained at the end of the protocol. However, since the proposed protocol is an $(M, M)$ threshold scheme, they cannot infer Alice's secret without Bob's sequence $K_1$. To illustrate this, let's consider the case where the $M-1$ participants try to infer one bit of Alice's secret:

\begin{align}
    K_1^j \oplus K_2^j \oplus \dots \oplus K_M^j = \begin{cases}
        x \oplus 0 = x,  \\
        x \oplus 1 = \Bar{x}.
    \end{cases}
      \end{align}
We see from this equation that the $M-1$ dishonest participants have a probability of $\frac{1}{2}$ of guessing the correct bit. Therefore, we can deduce that they have a total probability of $\frac{1}{2}^N$ of guessing Alice's entire secret $S$. this probability converges to 0 when the length of the secret $N$ becomes large enough.

A second strategy involves launching an attack on the sequence $P_{1}\prime$ that Alice transmits to Bob through the quantum channel. However, the $M-1$ dishonest participants cannot distinguish the decoy qubits from the entangled qubits in  $P_{1}\prime$ because they are all in the maximally mixed state $\rho = (\ket{0}\bra{0} + \ket{1}\bra{1})/2$. Therefore, similar to the scenario with one dishonest participant, the $M-1$ malicious participants essentially act as outside eavesdroppers. As analyzed previously, their malicious behavior will inevitably be detected by Alice during the eavesdropping check. Thus, this attacking strategy is ineffective in this protocol.

\section{Efficiency analysis and comparison} \label{eff}

In this section, we perform an efficiency analysis of the proposed protocol and compare its performance with typical multi-party SQSS protocols.

To compute the qubit efficiency of the proposed scheme, we use the following formula:

\begin{equation}
    \eta = \frac{c}{q}.
\end{equation}
Here, $c$ represents the length of the classical shared secret, and $q$ represents the total number of qubits generated in the protocol, including those prepared by the classical participants.

In our scheme, Alice wants to share a classical bit string $S$ of length $N$ with $M$ participants. Therefore, $c = N$. For that, Alice generates $2N$ entangled states of $M$ particles. She also needs to generate $M$ groups of decoy Bell pairs, with each group containing $K$ Bell pairs, resulting in $2K$ qubits per group. As our protocol is a one-way communication scheme, the classical participants do not generate any qubits to send back to Alice. Therefore, the total number of qubits generated during the protocol is:

\begin{equation}
    q = (2N + 2K)M,
\end{equation}. 
if we set $K=N$, the qubit efficiency of the proposed scheme is:

\begin{equation}
    \eta = \frac{N}{4NM} = \frac{1}{4M}.
\end{equation}
The qubit efficiency of Refs. \cite{Li2010, Li2013, CHUNWEI2013, Xie2015, Yu2017, Li2020, Younes2024} can be calculated in a similar manner, and the specific results are shown in Table \ref{table 1}.

We now compare our protocol with similar multi-party schemes. The details of the comparison are shown in Table \ref{table 1}. From this comparison, it is evident that our protocol offers several advantages over the other schemes:

\begin{enumerate}
    \item As the protocol is a one-way communication scheme, the classical participants do not need to equip themselves with expensive quantum devices such as  photon number splitter (PNS) and wavelength filter (WF) to mitigate the Trojan horse attack, aligning with the original intent of a semi-quantum environment; unlike in the other protocols. 
    \item The classical participants require only two capabilities instead of three, and they do not need preparation devices to generate qubits, unlike in other protocols.
    \item Our protocol exhibits a higher qubit efficiency than most multi-party SQSS scheme.
    \item The protocol is more practical as it allows Alice to choose the content of the secret she want to share with the $M$ participants, rather than sharing a random secret like some other protocols. 
\end{enumerate}

\begin{table}
\caption{\label{table 1} Comparison of the protocol with other Multiparty SQSS protocols.}
\resizebox{\textwidth}{!}{%
\begin{tabular}{cccccccc}
\toprule
 Protocols & Quantum resources & \thead{Classical participants' \\ ability} & \thead{Mitigate the \\ Trojan horse attack} & Mitigate the DCNA & Sharing secret & Qubit efficiency \\
\midrule
 Li et al.'s \cite{Li2010} & Multi-qubits entangled states & \makecell{(1) Generate $\ket{0}$ or $\ket{1}$ \\ (2) Measure in $Z$ \\ (3) Reflect} & No & Yes & Unspecific & $\cfrac{1}{2^{M}(3M+2)}$ \\
 \hline
 Li et al.'s \cite{Li2013} & Single qubits & \makecell{(1) Generate $\ket{0}$ or $\ket{1}$ \\ (2) Measure in $Z$ \\ (3) Reflect} & No & Yes & Unspecific & $\cfrac{1}{2^{M}(3M)}$ \\
  \hline
 Yang et al.'s \cite{CHUNWEI2013} & Single qubits & \makecell{(1) Generate $\ket{0}$ or $\ket{1}$ \\ (2) Measure in $Z$ \\ (3) Reflect} & No & Yes & Unspecific & $\cfrac{1}{6M}$ \\
  \hline
 Xie's et al.'s \cite{Xie2015} & Multi-qubits entangled states & \makecell{(1) Generate $\ket{0}$ or $\ket{1}$ \\ (2) Measure in $Z$ \\ (3) Reflect} & No & Yes & Specific & $\cfrac{1}{2^{M-1}(3M+2)}$ \\
 \hline 
 Yu et al.'s \cite{Yu2017} & Multi-qubits entangled states & \makecell{(1) Generate $\ket{0}$ or $\ket{1}$ \\ (2) Measure in $Z$ \\ (3) Reflect} & No & Yes & Unspecific & $\cfrac{1}{6M+4}$ \\
 \hline
 Li et al.'s \cite{Li2020} & Bell states & \makecell{(1) Generate $\ket{0}$ or $\ket{1}$ \\ (2) Measure in $Z$ \\ (3) Reflect} & No & Yes & Unspecific & $\cfrac{1}{5M}$ \\
 \hline
 Ye et al.'s \cite{Li2024} & \makecell{(1) Multi-qubits entangled states \\ (2) Single qubits } & \makecell{(1) Generate $\ket{0}$ or $\ket{1}$ \\ (2) Measure in $Z$ \\ (3) Reflect} & No & Yes & Specific & $\cfrac{1}{3M+1}$ \\
 \hline
 Younes et al.'s \cite{Younes2024} & \makecell{(1) Multi-qubits entangled states \\ (2) Single-qubits} & \makecell{(1) Generate $\ket{0}$ or $\ket{1}$ \\ (2) Measure in $Z$ \\ (3) Reflect} & Yes & No & Specific & $\cfrac{1}{3M}$ \\
 \hline
 Our protocol & \makecell{(1) Multi-qubits entangled states \\ (2) Bell states } & \makecell{(1) Measure in $Z$ \\ (2) Perform $H$} & Yes & Yes & Specific & $\cfrac{1}{4M}$ \\

\bottomrule
\end{tabular}%
}

\end{table}

\section{Conclusion} \label{conclusion}

In this paper, we introduced a novel multi-party semi-quantum secret sharing (SQSS) scheme based on $M$-particle entangled states. This protocol improves upon our previous scheme by enabling classical participants to measure in the $Z$ basis and perform the Hadamard operation. Instead of using decoy photons, we leverage the quantum property between Bell states and the Hadamard operation to detect eavesdroppers. These changes address two main issues: (1) the security weakness of Younes et al.'s protocol against the double CNOT attack (DCNA), and (2) the need for a fully one-way communication scheme via the quantum channel to ensure robustness against Trojan horse attacks. Additionally, we demonstrated that our protocol withstands other types of attacks, including intercept-resend and collective attacks, from both internal and external sources. Our protocol offers advantages over other multi-party SQSS schemes, including better qubit efficiency and allowing Alice to control the shared secret's content. Furthermore, with recent advancements in preparing high-fidelity entangled and Bell states using IBM quantum computers \cite{Cruz2019, Sisodia2020}, our protocol is feasible for implementation with current technology.

\bibliographystyle{plainnat}
 \bibliography{msqss_ref}

\end{document}